\documentclass{article}
\usepackage{lorenzo}
\usepackage{graphicx} 
\usepackage[backend=bibtex]{biblatex}
\usepackage[utf8]{inputenc}
\IfFileExists{newunicodechar.sty}{%
  \usepackage{newunicodechar}%
  \newunicodechar{ễ}{\ensuremath{\check{\epsilon}}}%
}{}
\input{bib_macros}

\title{A Tight Analysis of Khatri-Rao Oblivious Subspace Embeddings}

\author[1]{Lorenzo Beretta}
\author[2]{Cameron Musco}
\affil[1]{IBM, \texttt{lorenzo2beretta@gmail.com}}
\affil[2]{UMass Amherst, \
\texttt{cameron.musco@gmail.com}}
\date{}

\begin{document}

\vspace{-2em} \maketitle

\begin{abstract}
    We study random sketching matrices with Khatri-Rao structure. In particular, we consider the Khatri-Rao product (i.e., column-wise tensor product) $A_1\odot\cdots\odot A_d \in \R^{(n_1 \cdots n_d) \times m}$ of random matrices $A_i \in \R^{n_i \times m}$ whose columns are isotropic, independent and sub-Gaussian (e.g., Gaussian matrices).  Khatri-Rao sketching matrices are widely applied in randomized algorithms for linear algebraic computation and data analysis, when the input data has tensor structure that allows for fast multiplication with $A_1\odot\cdots\odot A_d$. However, existing theory is not able to fully explain their performance in practice. In particular, despite significant attention, our best bounds for the important \emph{oblivious subspace embedding}  property with Khatri-Rao matrices lag behind what is achievable with standard  unstructured matrices.

   For embedding a $k$-dimensional subspace to $(1\pm \epsilon)$ error, Bujanović et al. \cite{bujanovic2025subspace} prove that sketching dimension $m = O(k^{3/2}/\epsilon^2)$ suffices in the special case of $d = 2$. Their dependence on $k$ is weaker than the tight bound of $O(k/\epsilon^2)$ known for unstructured sub-Gaussian sketching matrices. In this work, we close this gap, showing that $m = \tilde O(k/\epsilon^2)$ suffices for subspace embedding with a Khatri-Rao sketching matrix with any fixed order $d$. Our proof is simple, leveraging just two basic properties of the Khatri-Rao sketching distribution: 1) the columns of $A_1\odot\cdots\odot A_d \in \R^{(n_1 \cdots n_d) \times m}$ are independent and isotropic, and 2) each column of $A_1\odot\cdots\odot A_d \in \R^{(n_1 \cdots n_d) \times m}$ satisfies a weak Johnson-Lindenstrauss type moment property.
\end{abstract}

\section{Introduction}

    Random sketching is a key tool in work on fast algorithms for linear algebraic computation, large-scale data analysis, and low-memory streaming computation \cite{muthukrishnan2005data,mahoney2011randomized,woodruff2014sketching}. Given an input matrix $X \in \R^{n \times p}$, we generate a \emph{random sketching matrix} $\Psi \in \R^{n \times m}$, and compute the sketch $\Psi^T X \in \R^{m \times p}$. $\Psi^T X$ serves as a lossy compression of $X$, that can be used to approximately solve a breadth of problems, such as  low-rank approximation \cite{halko2011finding}, linear regression \cite{sarlos2006improved}, clustering \cite{cohen2015dimensionality}, eigenvalue problems \cite{swartworth2023optimal}, and beyond.

    In many applications, $X$ is known to be structured -- e.g., it may be sparse or may be given as the tensor product of several small matrices. In such applications, it is desirable to use a sketching matrix $\Psi$ that also has structure, that facilitates fast computation of $\Psi^T X$.
In applications where $X$ has tensor structure, a key tool in the literature is the \emph{Khatri-Rao random sketch} \cite{pham2013fast,avron2014subspace,ahle2020oblivious,sun2021tensor,chen2021tensor,bamberger2022johnson,bujanovic2025subspace,saibaba2025improved,camano2025faster,haselby2026fast}.

    \begin{boxdefinition}[Khatri-Rao product]
Let $A\in\bbR^{n_a\times m}, B \in\bbR^{n_b\times m}$ have columns
$A = [a_{1},\dots,a_{m}]$ and $B = [b_{1},\dots,b_{m}]$. Their \emph{Khatri-Rao product}
is the column-wise Kronecker product
\[
A\odot B \coloneq
\begin{bmatrix}
a_{1}\otimes b_{1},
& \cdots, &
a_{m}\otimes b_{m}
\end{bmatrix}
\in\bbR^{(n_a\cdot n_b)\times m}.
\]
\end{boxdefinition}

In this work we consider a sketching matrix $\Psi = \frac{1}{\sqrt m} A_1\odot\cdots\odot A_d$ that is the Khatri-Rao product of sketching matrices $A_i \in \R^{n_i \times m}$, appropriately scaled. We will focus on the common case when all $A_i$ have independent sub-Gaussian columns, which of course includes independent standard Gaussian matrices. Although another important variant is the \emph{Tensor Sketch} of Pham and Pagh \cite{pham2013fast}, where the $A_i$ are sparse Count-Sketch matrices.
    
When the columns of $X \in \R^{(n_1 \cdots n_d) \times d}$ are also $d$-order tensor products, one can compute $\Psi^T X$ in $O((n_1 + \cdots + n_d) \cdot dm)$ time, as compared to $O(n_1 \cdots n_d) \cdot dm)$ time for a dense unstructured random projection. This represents an exponential savings in cost. For this reason, Khatri-Rao sketches have found applications in numerous settings, ranging from contour integral-based eigensolvers \cite{bujanovic2025subspace}, to compression algorithms for massive tensors and tensor networks, arising e.g., in quantum physics \cite{camano2026successive,haselby2026fast}, to approximation algorithms for polynomial kernels \cite{avron2014subspace,ahle2020oblivious}, and more \cite{kressner2017recompression,camano2025faster,chen2021tensor,bujanovic2021norm}

\subsection{Oblivious Subspace Embeddings from Khatri-Rao Sketches}

Given their importance, significant work has studied theoretical error bounds for Khatri-Rao random sketches. Much of this work has focused on two related guarantees: $(\epsilon,\delta)$-\emph{Johnson-Lindenstrauss embedding} and $(\epsilon,\delta,k)$-\emph{oblivious subspace embedding (OSE)}. For the first, it has been shown  by Ahle et al. \cite{ahle2020oblivious} that for any $\epsilon,\delta \in (0,1)$ and any fixed $x \in \R^{n_1 \cdots n_d}$, $(1-\epsilon) \norm{x}_2 \le \norm{\Psi^T x}_2 \le (1+\epsilon) \norm{x}_2$ with probability at least $1-\delta$ when the sketching dimension is at least 
\begin{align}\label{eq:jl}
    m = O_d \left (\frac{\log(1/\delta)}{\epsilon^2} + \frac{\log(1/\delta)^d}{\epsilon} \right ),
\end{align}
where $O_d(\cdot)$ hides $2^{O(d)}$ factors.
Similar bounds appear e.g., in \cite{sun2021tensor,jin2021faster,bamberger2022johnson,haselby2026fast}. As compared to dense unstructured random projections, which achieve the same bound with $m = O(\log(1/\delta)/\epsilon^2)$, the main loss in using a Khatri-Rao structured sketch is the dependence on $\log(1/\delta)^d$.
This work focuses on oblivious subspace embeddings, defined below.

\begin{boxdefinition}[Oblivious subspace embedding]
Let $\eps, \delta \in (0, 1)$ and let $k \le N$ be positive integers. A random matrix $\Psi \in \bbR^{N \times m}$ is an \emph{$(\eps, \delta, k)$-oblivious subspace embedding} (OSE) if, for every $k$-dimensional subspace $V \subseteq \bbR^{N}$,
\[
\Pru{\Psi}{
(1-\eps)\norm{x}_2 \le \norm{\Psi^\top x}_2 \le (1+\eps)\norm{x}_2
\ \text{ for all } x \in V
} \ge 1 - \delta.
\]
\end{boxdefinition}

The OSE property is a key tool in deriving theoretical analysis for sketched linear regression, low-rank approximation, and many other linear algebraic problems \cite{sarlos2006improved,woodruff2014sketching}. It can be derived from a Johnson-Lindenstrauss bound via a standard net argument -- it suffices for $\Psi$ to simultaneously preserve the norm of $2^{O(k)}$ points that form a covering of the subspace. To ensure this with probability at least $1-\delta$, one sets the failure probability for embedding to $\delta/2^{O(k)}$ and applies a union bound. Unfortunately, following this approach and applying the bound of \eqref{eq:jl} gives embedding dimension for Khatri-Rao $(\epsilon,\delta,k)$-OSEs of:
\begin{align}\label{eq:netBound}
m = O_d \left (\frac{\log(2^{O(k)}/\delta)}{\epsilon^2} + \frac{\log(2^{O(k)}/\delta)^d}{\epsilon} \right ) = O_d \left (\frac{k \log(1/\delta)}{\epsilon^2} + \frac{(k+ \log(1/\delta))^d}{\epsilon} \right ).
\end{align}
See e.g., \cite{saibaba2025improved,haselby2026fast} for examples of works that derive and apply such a bound.
As compared to standard unstructured embeddings, which achieve $O(k \log(1/\delta)/\epsilon^2)$, the bound has a significantly suboptimal dependence on $k$, even e.g., when $d = 2$, where the bound is quadratic, rather than linear in $k$.

Recently, Bujanovi\'c, Grubi\v{s}i\'c, Kressner, and
Lam~\cite{bujanovic2025subspace} partially resolved this gap, focusing on the special case of $d = 2$. In this setting, they give an improved bound of 
\begin{align}\label{eq:kressner}
m = O \left (
\frac{k^{3/2} + k\log(1/\delta)}{\epsilon^2} + \frac{k^{1/2}\log(1/\delta)^2}{\epsilon} \right ).
\end{align}
However, this bound still loses a $\sqrt{k}$ factor as compared to standard unstructured sketches. Cama{\~n}o, Epperly, Meyer, and Tropp \cite{camano2025faster} show that the gap can be closed if one instead asks for a weaker \emph{oblivious subspace injection} property, which is sufficient e.g., in the analysis of some low-rank approximation algorithms. However, deriving tight bounds for oblivious subspace embedding with Khatri-Rao sketches, or showing that existing gaps from unstructured sketches are inherent, has remained open.

\subsection{Our Results}

We resolve this question, showing that Khatri-Rao random sketches can indeed nearly match unstructured sketches by achieving OSEs with a nearly linear dependence on the subspace dimension $k$. 
\begin{restatable}[Main Theorem]{theorem}{KhatriRaoMain}
\label{thm: main thm}
Let $\eps, \delta \in (0, 1)$, $n_1, \dots , n_d \geq 2$ and $k \le N = \prod_{i=1}^{d} n_i$. 
Let $A_i \in \bbR^{n_i \times m}$ be random matrices whose columns are independent, isotropic and have constant sub-Gaussian norm. Let $\Psi = \frac{1}{\sqrt{m}} A_1 \odot \cdots \odot A_d \in \bbR^{N \times m}$. 
Then $\Psi$ is an
$(\eps,\, \delta,\,k)$-OSE provided that
\[
m \ge
d^{O(d)} \cdot \frac{k}{\eps^2} \cdot \left(\log \left(\frac{k}{\eps \delta}\right)\right)^{d+1}.
\]
\end{restatable}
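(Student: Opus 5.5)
The plan is to reduce the OSE property to a spectral-norm bound for a sum of independent rank-one matrices, and to use only the two properties highlighted in the abstract. Fix a $k$-dimensional subspace $V$ with orthonormal basis $U \in \bbR^{N\times k}$. Set $z_j = U^\top (a^{(1)}_j\otimes\cdots\otimes a^{(d)}_j) \in \bbR^k$, where $a^{(i)}_j$ is the $j$-th column of $A_i$. The OSE condition is then equivalent to
\[
\Big\| \frac1m \sum_{j=1}^m z_j z_j^\top - I_k \Big\|_2 \le \eps .
\]
The $z_j$ are i.i.d. and isotropic, since $\mathbb{E}[(a^{(1)}\otimes\cdots\otimes a^{(d)})(a^{(1)}\otimes\cdots\otimes a^{(d)})^\top]=I_N$ by independence and isotropy of each factor. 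So the task is a covariance-estimation bound for $m$ i.i.d. isotropic vectors in $\bbR^k$. The vectors are heavy-tailed rather than sub-Gaussian, but they satisfy a weak JL moment property. Concretely, for every fixed unit $u\in\bbR^k$, the quantity $\langle z,u\rangle=\langle a^{(1)}\otimes\cdots\otimes a^{(d)}, Uu\rangle$ is a degree-$d$ chaos in independent sub-Gaussian vectors. By standard hypercontractivity / Hanson--Wright-type moment bounds (as behind \eqref{eq:jl}), its $p$-th moment is at most $(Cp)^{d/2}$, i.e.\ it has tails $\exp(-c t^{2/d})$. Likewise $\|z\|_2^2 = \|U^\top(a^{(1)}\otimes\cdots\otimes a^{(d)})\|_2^2$ is a chaos with mean $k$, and it concentrates with $\|z\|_2^2 \le k\cdot (C\log(m/\delta))^{d}$ with probability $1-\delta/(2m)$.

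The main step is a truncation plus matrix Bernstein argument. Let $L = k\,(C\log(mk/\eps\delta))^{d}$ and define $\tilde z_j = z_j\mathbf 1[\|z_j\|_2^2\le L]$. With probability $1-\delta/2$ no truncation occurs, so it suffices to bound $\|\frac1m\sum_j\tilde z_j\tilde z_j^\top - I\|$. I will split this into two parts. The first is the bias $\|\mathbb{E}\tilde z\tilde z^\top - I\| \le \sup_u \mathbb{E}[\langle z,u\rangle^2\mathbf 1[\|z\|^2>L]]$. By Cauchy--Schwarz with the fourth-moment bound $\mathbb{E}\langle z,u\rangle^4\le C^d$ and the tiny tail probability, this is $\le \eps/2$. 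The second is the fluctuation, which I control by matrix Bernstein. Each summand has norm at most $L/m$, and the variance satisfies $\|\sum_j \mathbb{E}\|\tilde z_j\|^2\tilde z_j\tilde z_j^\top/m^2\|\le L/m$. Matrix Bernstein then gives deviation $\lesssim \sqrt{L\log(k/\delta)/m}+L\log(k/\delta)/m$, which is $\le\eps/2$ once $m\gtrsim L\log(k/\delta)/\eps^2$. Substituting $L$ gives $m \gtrsim d^{O(d)} k\,\eps^{-2}\log(mk/\eps\delta)^{d+1}$. Resolving the mild self-reference in $\log m$ by the usual argument ($m$ is polynomial in $k,1/\eps,\log(1/\delta)$, and $d^{O(d)}$ absorbs constants) yields the stated bound.

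The step I expect to be the main obstacle is the high-probability upper bound on $\|z\|_2^2$ with only a $\mathrm{polylog}^d$ loss over its mean $k$. For this I would first try a moment computation: $\|U^\top (a^{(1)}\otimes\cdots\otimes a^{(d)})\|_2 = \|U^\top x\|$ for a random product vector, and I would apply the JL moment bound iteratively, one mode at a time. Conditioning on $a^{(2)},\dots,a^{(d)}$, the vector $U^\top(a^{(1)}\otimes w)$ is linear in $a^{(1)}$ with Frobenius norm $\|U^\top(I\otimes w)\|_F$. Applying Hanson--Wright in $a^{(1)}$ and inducting over the modes gives $\|\,\|z\|^2\,\|_{L^p}\le k (Cp)^{d}$, with $d^{O(d)}$ constants. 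Choosing $p=\log(m/\delta)$ then gives the tail bound. The induction must be arranged so that the operator-norm terms from Hanson--Wright contribute $p$ factors rather than extra $k$ factors, and this bookkeeping is where I anticipate the technical care.
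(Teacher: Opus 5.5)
Your proposal is correct and follows the paper's proof essentially step for step: isotropy of the tensorized columns, truncation at $k\cdot\mathrm{polylog}^d$, a matrix Chernoff/Bernstein bound, and a bias estimate for the truncation (you use Cauchy--Schwarz with a fourth moment where the paper uses a layer-cake integral). The tail bound on $\|z\|_2^2$ that you flag as the main obstacle follows directly from your own per-direction bound, which is how the paper does it: apply the scalar chaos moment bound $\|\langle z,e_j\rangle\|_{L_{2q}}\le c^d(2q)^{d/2}$ to each coordinate and sum with the triangle inequality in $L_q$, giving $\bigl\|\,\|z\|_2^2\,\bigr\|_{L_q}\le\sum_{j=1}^k\|\langle z,e_j\rangle\|_{L_{2q}}^2\le k\,c^{2d}(2q)^d$, with no Hanson--Wright bookkeeping.
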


Fixing $\epsilon,\delta$ to be constants, in the case of $d = 2$, our bound improves on the existing state-of-the-art bound in \eqref{eq:kressner} by replacing a $\sqrt{k}$ with a $\log(k)^3$. For $d > 2$, the best known prior bound is \eqref{eq:netBound}, which, we improve by replacing a $k^d$ with a $k \log(k)^{d+1}$.

\smallskip

\noindent\textbf{Proof Overview.} 
Let $\Phi = A_1 \odot \dots \odot A_d$ be the unnormalized sketch, so $\Psi = \frac{1}{\sqrt m} \Phi$.
The proof of our main result is elementary and uses just two simple facts about Khatri-Rao random sketches:
\begin{enumerate}
    \item \textbf{The columns of $\Phi \in \R^{N \times m} $ are independent and isotropic.} When the columns of $A_1, \dots, A_d$ are independent and isotropic, this is straightforward.
    Indeed, the columns of $\Phi$ are independent since they are tensor products of different independent columns of $A_1,\ldots,A_d$. They are isotropic since they are tensorizations of the independent isotropic columns of $A_1,\ldots,A_d$.
    \item \textbf{The columns of $\Phi \in \R^{N \times m}$ satisfy a weak Johnson-Lindenstrauss type moment property.} In particular, for any $U \in \R^{N \times k}$ with orthonormal columns, we show that, with probability at least $1-\delta$,
    \[
    \norm{U^T \Phi_{*,i}}_2^2 \le 2^{O(d)} \cdot \norm{U}_F^2 \log(1/\delta)^d = 2^{O(d)} \cdot k \log(1/\delta)^d.
    \] 
     This bound is similar in spirit and proven using similar tools to existing Johnson-Lindenstrauss bounds for Khatri-Rao sketches, like \eqref{eq:jl}. In fact, a very similar bound is shown in \cite{bamberger2022hanson}, although they give bounds that are weaker in our parameter regime.
\end{enumerate}

With these two facts in hand, we prove our result using a standard matrix Chernoff bound \cite{tropp2012user}. In particular, $\Psi$ is an $(\epsilon,\delta,k)$-OSE if for any orthonormal $U \in \R^{N \times k}$, with probability at least $1-\delta$, for all unit-norm $x \in \R^k$, 
$$
(1-\epsilon) \le x^T U^T \Psi \Psi^T U x \le (1+\epsilon).\quad\text{Equivalently if:}\quad \norm{U^T \Psi \Psi^T U - I}_{\op} \le \epsilon.
$$

By the observation that the columns of $\Phi$ are isotropic, we have $\E[U^T \Psi \Psi^T U] = I$.  So our bound amounts to showing that $\norm{U^T \Psi \Psi^T U - \E[U^T \Psi \Psi^T U ]}_{\op} \le \epsilon$. Writing $U^T \Psi \Psi^T U$ as an average of $m$ independent outer products $$U^T \Psi \Psi^T U= \frac 1m \sum_{i=1}^m U^T \Phi_{*,i} \Phi_{*,i}^T U,$$ this deviation is exactly what a matrix Chernoff bound allows us to bound. We just need an upper bound $R^2$ on the squared norms $\norm{U^T \Phi_{*,1}}_2^2,\ldots, \norm{U^T \Phi_{*,m}}_2^2$. But this is given (with high enough probability) by our weak Johnson-Lindenstrauss type bound. In particular, union bounding over all $m$ columns, we know that, with high probability, all squared norms are simultaneously bounded by $R^2 = O(k \log(m/\delta)^{d})$. We can simply \emph{truncate} the vectors in the low-probability event that the bound fails, giving a small error which is absorbed into the final OSE bound.

Ultimately, just applying the standard matrix Chernoff bound, this gives that an embedding dimension of $m$ suffices as long as 
$$
m \gtrsim \frac{R^2 \log(k/\delta)}{\epsilon^2} =  \frac{k \log(m / \delta)^d \log(k/\delta)}{\epsilon^2},
$$ 
which holds for $m = O_d \left ( \eps^{-2} \cdot k \log(k/(\epsilon \delta))^{d+1} \right ) $. This gives the bound stated in \Cref{thm: main thm}.


\section{Proof of the Main Theorem}

Throughout, $\Phi = A_1 \odot \cdots \odot A_d \in \bbR^{N \times m}$ and $\Psi = \frac{1}{\sqrt{m}} \Phi$
is the Khatri--Rao sketch of \Cref{thm: main thm}, with $N = \prod_{i=1}^d n_i$. 
We fix an arbitrary $k$-dimensional subspace and denote by
$U \in \bbR^{N \times k}$ a matrix whose columns form an orthonormal basis of that subspace.

In \Cref{sec:iso columns} we prove that $\Phi_{*, 1} \dots \Phi_{*, m}$ are
\emph{isotropic} and statistically independent. In \Cref{sec:norm concentrates} we prove a \emph{tail bound} bounding the probability that $\|U^\top \Phi_{*, i}\|_2^2$ is much larger than its expectation.
In \Cref{sec:final proof}, we leverage these two properties along with the matrix Chernoff bound to prove \Cref{thm: main thm}.

\subsection{The columns of $\Phi$ are isotropic and independent}
\label{sec:iso columns}
The following lemma states that a Kronecker product of independent isotropic vectors
is again isotropic.

\begin{boxlemma}
\label{lem: isotropy}
Let $g_i \in \bbR^{n_i}$ for all $i \in [d]$ be statistically independent, isotropic, random vectors. Then, $\phi = g_1 \otimes \cdots \otimes g_d$ is isotropic,
that is, $\Ex{\phi \phi^\top} = I$.
\end{boxlemma}

\begin{proof}
It suffices to treat two factors, since the general case then follows by
induction on $d$.  Let $g \in \bbR^{n_1}$ and $h \in \bbR^{n_2}$ be independent
and isotropic.  Their Kronecker product $g \otimes h$ has entries
$(g \otimes h)_{(i,j)} = g_i h_j$, so
\[
\Ex{(g \otimes h)_{(i,j)}\,(g \otimes h)_{(i',j')}}
=
\Ex{g_i g_{i'}}\,\Ex{h_j h_{j'}}
=
\ind\{i = i'\}\,\ind\{j = j'\},
\]
using independence for the factorization and isotropy of $g$ and $h$ for each
factor.  Thus, $\Ex{(g \otimes h)(g \otimes h)^\top} = I_{n_1 n_2}$.
Applying this identity repeatedly to $g_1 \otimes \cdots \otimes g_d$ shows that
$\phi$ is isotropic.
\end{proof}

\begin{boxobservation}
The columns of $\Phi$ are statistically independent as they are defined as Kronecker products of statistically independent vectors.
\end{boxobservation}

\subsection{Upper tail bound for $\|U^\top \Phi_{*, i}\|_2^2$}
\label{sec:norm concentrates}
Isotropy of $\Phi_{*, i}$ implies that $\Ex{\|U^\top \Phi_{*, i}\|_2^2} = k$.
The following lemma states that $\norm{U^\top \Phi_{*, i}}_2^2$ exceeds $k$ only by a factor of $O(\log(1/\delta)^d)$ at failure probability $\delta$.

\begin{boxproposition}
\label{lem: tail}
Let $g_i \in \bbR^{n_i}$ for all $i \in [d]$ be independent, isotropic random vectors, and let $\phi = g_1 \otimes \cdots \otimes g_d \in \bbR^N$.  Then
there is a constant $C = 2^{O(d)}$ such that for every $t \ge 1$ and every
$U \in \bbR^{N \times k}$ with orthonormal columns,
\[
\Pr{\norm{U^\top \phi}_2^2 > C\,k \, t}
\le
\exp\left(-t^{1/d}\right).
\]
\end{boxproposition}

Writing $U = [U_1,\dots,U_k]$ and $y \coloneq U^\top \phi \in \bbR^k$, each
coordinate $y_j = \inner{U_j}{\phi}$ is a degree-$d$ polynomial in the entries of
the independent factors $g_1,\dots,g_d$.  Concretely, reshaping the vector $U_j$ into an order-$d$ tensor
$T = T^{(j)}$ with $T_{i_1 \cdots i_d} = (U_j)_{(i_1,\dots,i_d)}$, we may write
\[
y_j
=
\sum_{i_1,\dots,i_d}
T_{i_1 \cdots i_d}\,(g_1)_{i_1} \cdots (g_d)_{i_d}.
\]
We would like to bound the moments of $y_j$.  For $d=1$ this is
standard: a linear combination of independent sub-Gaussian coordinates has
moments growing like $\sqrt p$, which we quote in the form we will use.

\begin{boximportedtheorem}[{\cite[Lemma~3.4.2 and Prop.~2.6.1]{vershynin2018high}}]
\label{thm: subgaussian moment}
Let $h = (h_1,\dots,h_n)$ be a random vector whose entries are independent and have constant sub-Gaussian norm. Then, $h$ has constant sub-Gaussian norm, and there exists a constant $c \geq 1$ such that for all $a \in \bbR^n$ and $p \ge 2$,
\[
\normp{\inner{a}{h}}{L_p}
\leq c \cdot 
\sqrt p\;\norm{a}_2.
\]
\end{boximportedtheorem}

The following lemma extends the $p$-th moment bound on $y_j$ given by \Cref{thm: subgaussian moment} to $d > 1$.

\begin{boxlemma}[Moment bound for sub-Gaussian chaoses]
\label{lem: chaos moment}
Let $c \ge 1$ be the constant of \Cref{thm: subgaussian moment}.  For every
order-$d$ tensor $T$, every collection of independent, isotropic, sub-Gaussian
vectors $g_1,\dots,g_d$ with constant sub-Gaussian norm, and every $p \ge 2$,
\[
\normp{\;\sum_{i_1,\dots,i_d} T_{i_1 \cdots i_d}\,(g_1)_{i_1} \cdots (g_d)_{i_d}\;}{L_p}
\le
c^{\,d}\,p^{d/2}\,\norm{T}_F.
\]
\end{boxlemma}

\begin{proof}
We induct on $d$.  When $d = 1$ the form is the linear combination
$\inner{T}{g_1}$, and \Cref{thm: subgaussian moment} gives
$\normp{\inner{T}{g_1}}{L_p} \le c \sqrt p\,\norm{T}_2$, which is the claim.

For the inductive step, group the sum according to its last index.  Writing
$S$ for the order-$d$ form and
$W_{i_d} \coloneq \sum_{i_1,\dots,i_{d-1}} T_{i_1 \cdots i_d}\,(g_1)_{i_1} \cdots
(g_{d-1})_{i_{d-1}}$ for the order-$(d-1)$ form obtained by fixing the last index,
we have $S = \sum_{i_d} (g_d)_{i_d}\,W_{i_d} = \inner{W}{g_d}$.  Crucially, the
vector $W = (W_{i_d})_{i_d}$ depends only on $g_1,\dots,g_{d-1}$, hence is
independent of $g_d$.  Conditioning on $g_1,\dots,g_{d-1}$ and applying
\Cref{thm: subgaussian moment} in the block $g_d$,
\[
\normp{S}{L_p}
\le
c\,\sqrt p\;\normp{\,\norm{W}_2\,}{L_p}.
\]
It remains to bound $\normp{\norm{W}_2}{L_p}$.  Squaring and using the triangle
inequality in $L_{p/2}$ (available since $p \ge 2$),
\[
\normp{\,\norm{W}_2\,}{L_p}^2
=
\normp{\,\textstyle\sum_{i_d} W_{i_d}^2\,}{L_{p/2}}
\le
\sum_{i_d} \normp{W_{i_d}}{L_p}^2.
\]
Each $W_{i_d}$ is a multilinear form of order $d-1$ with coefficient tensor
$T_{*, i_d}$, so the inductive hypothesis bounds
$\normp{W_{i_d}}{L_p} \le c^{\,d-1} p^{(d-1)/2}\,\norm{T_{*, i_d}}_F$.
Summing over $i_d$ and using $\sum_{i_d} \norm{T_{*, i_d}}_F^2 =
\norm{T}_F^2$ gives $\normp{\norm{W}_2}{L_p} \le c^{\,d-1} p^{(d-1)/2}\,
\norm{T}_F$.  Combining the two displays yields
$\normp{S}{L_p} \le c^{\,d} p^{d/2}\,\norm{T}_F$, completing the induction.
\end{proof}

We can now prove the tail bound.

\begin{proof}[Proof of \Cref{lem: tail}]
Applying \Cref{lem: chaos moment} to $y_j$, whose coefficient tensor $T^{(j)}$ is
just the reshaping of $U_j$ and so has Frobenius norm $\norm{U_j}_2 = 1$, gives
$\normp{y_j}{L_{2q}} \le c^{\,d}\,(2q)^{d/2}$ for every $q \ge 1$.  Combining these bounds and applying triangle inequality we obtain
\[
\normp{\norm{y}_2}{L_{2q}}^2
=
\normp{\,\textstyle\sum_{j=1}^k y_j^2\,}{L_q}
\le
\sum_{j=1}^k \normp{y_j}{L_{2q}}^2
\le
k\,c^{\,2d}\,(2q)^{d}.
\]
Markov's inequality applied to the $2q$-th moment of $\norm{y}_2$ therefore gives,
for every threshold $s > 0$,
\begin{equation*}
\label{eq:tail bound final}
\Pr{\norm{y}_2^2 \ge s\,k}
\le
\frac{\normp{\norm{y}_2}{L_{2q}}^{2q}}{(s k)^{q}}
\le
\left(\frac{c^{\,2d}\,(2q)^{d}}{s}\right)^{q}.
\end{equation*}

It remains to choose the moment order $q$.  Set $C \coloneq e^2 \cdot c^{\,2d}=2^{O(d)}$, and fix a threshold of the
form $s = C\,t$ with $t \ge 1$.  Taking
$q =  \tfrac12\,t^{1/d}$, we have $(2q)^d \le t$ and
\[
\frac{c^{\,2d}\,(2q)^{d}}{s}
\le
\frac{c^{\,2d}\,t}{C\,t}
\le
\frac{1}{e^2}.
\]
Hence,
\[
\Pr{\norm{y}_2^2 \ge C\,k\,t}
\le
e^{-2q}
=\exp \left(-t^{1/d}\right).
\]
\end{proof}

\subsection{Proof of \Cref{thm: main thm}}
\label{sec:final proof}

Our goal is to prove that with probability $1-\delta$ we have $\norm{\Psi^\top x}_2 = (1\pm \eps)  \|x\|_2$ for all $x \in \mathrm{Span}(U)$.
Up to constant factors in $\eps$, we can rephrase our desiderata as
\begin{equation*}
\label{eq:desiderata}
\Pr{(1-\eps)I_k \preceq U^\top \Psi \Psi^\top U \preceq (1+\eps)I_k} \geq 1-\delta.
\end{equation*}
Let $\Psi = \tfrac{1}{\sqrt m}\,[\phi_1 \mid \cdots \mid \phi_m]$, so that
$\phi_1,\dots,\phi_m$ are the columns of $\Phi$, and set
$z_i \coloneq U^\top \phi_i$.  Then
\[
U^\top \Psi \Psi^\top U = \frac1m \sum_{i=1}^m z_i z_i^\top,
\]
is the average of $m$ statistically independent rank-one matrices. Our goal is to
show
\begin{equation}
    \label{eq:desiderata-pp}
\Pr{(1-\eps)I_k \preceq \frac1m \sum_{i=1}^m z_i z_i^\top \preceq (1+\eps)I_k} \geq 1-\delta.
\end{equation}
To prove \Cref{eq:desiderata-pp} we will use the following matrix concentration bounds.

\begin{boximportedtheorem}[Matrix Chernoff bounds \cite{tropp2012user}]
\label{thm: matrix chernoff}
Let $Y_1,\dots,Y_m \in \bbR^{k \times k}$ be independent, positive semidefinite,
and satisfy $Y_i \preceq R^2 I_k$ almost surely.  Write
$\mu_{\min}$ and $\mu_{\max}$ for the smallest and largest eigenvalues of
$\sum_i \Ex{Y_i}$.  Then for every $t \in (0,1)$,
\begin{align*}
\Pr{\lambda_{\max}\!\Bigl(\textstyle\sum_i Y_i\Bigr) \ge (1+t)\mu_{\max}}
&\le
k\,\exp\!\left(-\frac{t^2 \mu_{\max}}{3R^2}\right),
\\[2pt]
\Pr{\lambda_{\min}\!\Bigl(\textstyle\sum_i Y_i\Bigr) \le (1-t)\mu_{\min}}
&\le
k\,\exp\!\left(-\frac{t^2 \mu_{\min}}{2R^2}\right).
\end{align*}
\end{boximportedtheorem}

\paragraph{A truncation argument.} 
Unfortunately, the quality of the bound in \Cref{thm: matrix chernoff} depends on a uniform bound on the spectral radius of all $z_i z_i^\top$ for $i \in [m]$.
However, the terms $z_i z_i^\top$ might have arbitrarily large spectral radius. We obviate this issue via a truncation argument. Let
\begin{equation*}
\label{eq:def R}
R^2 \coloneq C\,k\,\bigl(\log(m/\delta)\bigr)^d,
\end{equation*}
which is the tail threshold of \Cref{lem: tail} at failure probability
$\delta/m$, and accordingly $C = 2^{O(d)}$. Then, set $\wt z_i \coloneq z_i\,\ind_{\{\norm{z_i}_2 \le R\}}$ and
$Y_i \coloneq \wt z_i \wt z_i^\top$.  By construction $0 \preceq Y_i \preceq R^2
I_k$.  By the choice of $R$ and \Cref{lem: tail}, each $i\in [m]$ satisfies
$\norm{z_i}_2 > R$ with probability at most $\delta/m$, so a union bound over
$i \in [m]$ shows that with probability at least $1 - \delta$ no $z_i$ is
truncated. Thus the following observation holds.

\begin{boxobservation}
    \label{obs:truncation immaterial}
With probability at least $1-\delta$ we have $\tfrac1m \sum_i Y_i = U^\top \Psi \Psi^\top U$.    
\end{boxobservation}

Our next goal is to prove that 
\[
\Pr{(1-\eps)I_k \preceq \frac1m \sum_{i=1}^m Y_i \preceq (1+\eps)I_k} \geq 1-\delta.
\]
We will do this in two steps: first, we prove that the expectation of $\frac1m \sum_{i=1}^m Y_i$ is nearly the identity, and then we prove that $\sum_{i=1}^m Y_i$ concentrates about its expectation using \Cref{thm: matrix chernoff}.

\begin{boxlemma}
    \label{lem:expectation of Y_i}
$(1-O(\eps)) I_k \preceq \Ex{\frac1m \sum_{i=1}^m Y_i} \preceq I_k$.
\end{boxlemma}
\begin{proof}
By \Cref{lem: isotropy}, we have $\Ex{z_i z_i^\top} = I_k$, so
\[
\Ex{Y_i}
=
I_k - \Ex{z_i z_i^\top \ind_{\{\norm{z_i}_2 > R\}}},
\]
and the correction term $\Ex{z_i z_i^\top \ind_{\{\norm{z_i}_2 > R\}}}$ is positive semidefinite, hence $\Ex{Y_i} \preceq I_k$.
Writing $W \coloneq \norm{z_i}_2^2$, the operator norm of the correction term is at most $\Ex{W\,\ind_{\{W > R^2\}}}$.
Read as a tail bound, \Cref{lem: tail} states
$\Pr{W > t} \le \exp(-(t/(Ck))^{1/d})$ for $t \ge Ck$.  Setting
$s_0 \coloneq (R^2/Ck)^{1/d} = \log(m/\delta)$ and substituting
$s = (t/(Ck))^{1/d}$ in the layer-cake formula for $\Ex{W\,\ind_{\{W > R^2\}}}$,
\begin{align*}
\Ex{W\,\ind_{\{W > R^2\}}}
&=
R^2\,\Pr{W > R^2} + \int_{R^2}^\infty \Pr{W > t}\,dt
\\
&\le
R^2\,e^{-s_0} + Ck\,d\int_{s_0}^\infty s^{d-1}e^{-s}\,ds
\\
&=
O(R^2\,e^{-s_0})
\\
&=
O\left(R^2\cdot\frac{\delta}{m}\right),
\end{align*}
where the third line holds because $s_0 \ge 2d$, which is implied by $m \geq 2^{2d}$. 
Moreover, we can verify that $m \geq 2^{O(d)} \cdot \frac{k}{\eps^2} \cdot \left(\log \left(\frac{k}{\eps \delta}\right)\right)^{d+1}$ satisfies $m = \Omega \left(R^2 \eps^{-1}\right)$, so $\Ex{W\,\ind_{\{W > R^2\}}} = O(\epsilon)$. Therefore, $(1-O(\eps)) I_k \preceq \Ex{Y_i} \preceq I_k$. Averaging over $i \in [m]$ yields the desiderata.
\end{proof}

\begin{boxlemma}
\label{lem:concentration of sum Y_i}
$\Pr{(1-O(\eps))I_k \preceq \tfrac1m \sum_i Y_i \preceq (1+O(\eps))I_k} \geq 1-\delta$.
\end{boxlemma}
\begin{proof}
The matrices $Y_1,\dots,Y_m$ are independent, positive
semidefinite, and $\preceq$-bounded from above by $R^2 I_k$, so \Cref{thm: matrix chernoff} applies to
their sum $\sum_i Y_i$.  
By \Cref{lem:expectation of Y_i}
we have $\mu_{\max} = \lambda_{\max}(\sum_i \Ex{Y_i}) \le m$ and
$\mu_{\min} = \lambda_{\min}(\sum_i \Ex{Y_i}) \ge m(1-O(\eps)) \ge 3m/4$.  Applying
the two tails of \Cref{thm: matrix chernoff} and taking
a union bound, the truncated average fails to satisfy
$(1-O(\eps))I_k \preceq \tfrac1m \sum_i Y_i \preceq (1+O(\eps))I_k$ with probability at
most $2k \cdot \exp(-\Omega(\eps^2 m/R^2))$.  The latter is at most $\delta$ as long as 
\begin{equation*}
\label{eq:bound on m}
m  = \Omega\left(\eps^{-2} R^2\,\log\frac{k}{\delta}\right)
\end{equation*}
which is satisfied by
$m \ge
2^{O(d)} \cdot \frac{k}{\eps^2} \cdot \left(\log \left(\frac{k}{\eps \delta}\right)\right)^{d+1}$.
\end{proof}

Finally, a union bound  over the failure probabilities of \Cref{obs:truncation immaterial} and \Cref{lem:concentration of sum Y_i} gives a combined failure probability of at most $2\delta$. When both these events happen we have the desiderata, up to constant factors in $\epsilon$ and $\delta$, which can be absorbed into $m$.

\paragraph{Acknowledgments.}
The first version of \cite{saibaba2025improved}, available on arXiv, proved a result essentially equivalent to our main theorem using the same proof strategy. However, this result was removed from the second version after an error was identified in its proof during the review process.

We were unaware of this earlier proof attempt while developing the present work. In hindsight, we acknowledge that the main contribution of this manuscript can be viewed as providing a correct proof of the result claimed in the first version of \cite{saibaba2025improved}.

\paragraph{AI Acknoledgement.}
We acknowledge the use of AI throughout the research process. Our proofs were developed through interactive conversation with LLMs. AI was also a valuable tool for typesetting and proofreading.
The authors certify correctness of manuscript and bear responsibility 
for it.

\printbibliography
\end{document}